\RequirePackage[2020-02-02]{latexrelease}
\documentclass[onecolumn,amsmath,amssymb,10pt,aps]{revtex4}
  
\pagestyle{plain}\textheight24cm

\usepackage[utf8]{inputenc}
\usepackage[T1]{fontenc}

\usepackage{amsmath}
\usepackage{amsfonts}
\usepackage{graphicx}
\usepackage{yfonts}
\usepackage{color}
\usepackage[normalem]{ulem}
\usepackage{amsthm}
\usepackage{bm}
\usepackage{bbm}
\usepackage{mathtools}
\usepackage{array}
\usepackage{placeins}
\usepackage{enumitem}
\usepackage{tikz}

\newcommand{\der}{\,\mathrm{d}}

\def\<{\langle}
\def\>{\rangle}

\newcommand{\Tr}{\mathrm{Tr}}
\def\oper{{\mathchoice{\rm 1\mskip-4mu l}{\rm 1\mskip-4mu l}
{\rm 1\mskip-4.5mu l}{\rm 1\mskip-5mu l}}}
\DeclareMathAlphabet\mathbfcal{OMS}{cmsy}{b}{n}

\mathchardef\mhyphen="2D % Define a "math hyphen"

\newtheorem{Theorem}{Theorem}

\newtheorem{Remark}{Remark}
\newtheorem{Proposition}{Proposition}
\newtheorem{Example}{Example}

\begin{document}

\title{Improving classical capacity of qubit dynamical maps\\through stationary state manipulation}
	
\author{Katarzyna Siudzi\'{n}ska}
\affiliation{Institute of Physics, Faculty of Physics, Astronomy and Informatics \\  Nicolaus Copernicus University in Toru\'{n}, ul. Grudzi\k{a}dzka 5, 87--100 Toru\'{n}, Poland}

\begin{abstract}
We analyze the evolution of Holevo and entanglement-assisted classical capacities for a special class of phase-covariant channels. In particular, we show that these capacities can be improved by changing the stationary state of the channel, which is closely related to its non-unitality degree. The more non-unital the channel, the greater its capacity. The channel parameters are engineered through mixtures on the level of dynamical maps, time-local generators, and memory kernels, for which we propose construction methods. For highly non-unital maps, we achieve a temporary increase in the classical capacity that exceeds the entanglement-assisted classical capacity of the unital map. This shows that non-unitality can become a better quantum resource for information transition purposes than quantum entanglement.
\end{abstract}

\flushbottom

\maketitle

\thispagestyle{empty}

\section{Introduction}

Establishing viable methods to reliably transmit information through quantum maps is an important task in quantum information processing and communication. However, the influence of external noise, which is often unavoidable, can be detrimental to many quantum information processing tasks. Therefore, it is necessary to protect information passed through a noisy channel. Approaches to error mitigation in quantum systems, also known as passive error correction, include decoherence-free subspaces (DFSs) \cite{Whaley}, noiseless subsystems \cite{Knill}, and dynamical decoupling \cite{Viola}.

Besides error correction, there are other ways to approach the problem of detrimental environmental noise effects on quantum systems. Another method is to accept the existence of noises and instead try to benefit from them. Verstraete et al. \cite{Verstraete} showed that local dissipation can be used as a resource for quantum computation and state engineering. For quantum information processing purposes, Marshall et al. \cite{Marshall} developed dissipation-assisted modular networks to control the loss of coherence and concurrence, whereas Gillard et al. \cite{Gillard} manipulated quantum thermal noise to enhance qubit state detection. Finally, memory effects associated with environmental noise were used to suppress error rates of quantum channels \cite{zanardi17,fidelity,Engineering_capacity}. An improved performance measured by the channel fidelity, output purity, and the ability to preserve quantum entanglement was attained through adding noises to the Markovian evolution on the level of memory kernel master equations \cite{Shabani,semi-2}. Recently, it has also been shown how to engineer non-unitality in order to enhance communication properties of qubit channels \cite{engineering_GAD}.

The maximal rate of information that can be reliably transmitted through a quantum channel is determined by the channel capacity. While classical channels have a unique definition of capacity, quantum channels can transmit information in a number of ways. A direct generalization of the Shannon capacity for classical channels is the classical capacity \cite{KingRemarks,Gyongyosi}, which measures the maximum rate of classical information sent with separable inputs. For any quantum channel $\Lambda$, it is defined via the asymptotic formula
\begin{equation}
C(\Lambda)=\lim_{n\to\infty}\frac 1n \chi(\Lambda^{\otimes n}),
\end{equation}
where the entropic expression
\begin{equation}
\chi(\Lambda)=\max_{\{p_k,\rho_k\}}\left[S\left(\sum_kp_k\Lambda[\rho_k]\right)
-\sum_kp_kS(\Lambda[\rho_k])\right]
\end{equation}
with the von Neumann entropy $S(\rho)=-\Tr(\rho\log_2\rho)$ is maximized over ensembles of separable states $\rho_k$ with probabilities $p_k$. In the above formula, $\chi(\Lambda)$ describes the capacity under a single use of the channel, also known as the Holevo capacity \cite{Holevo,sw}. Whenever the exact formula for the classical capacity is not known, $\chi(\Lambda)$ provides a lower bound, $\chi(\Lambda)\leq C(\Lambda)$. An upper bound can be found by calculating the entanglement-assisted classical capacity \cite{Bennett}
\begin{equation}
C_E(\Lambda)=\max_\rho I(\rho,\Lambda),
\end{equation}
where the sender and receiver share an unlimited amount of entanglement.
The maximalization is done over the mutual information
\begin{equation}
I(\rho,\Lambda)=S(\rho)+S(\Lambda[\rho])-S(\Lambda^c[\rho])
\end{equation}
with $\Lambda^c$ denoting the complementary channel \cite{ShorDevetak,Holevo_complement}. Other important channel capacities include the private classical capacity \cite{Devetak2} and quantum capacity \cite{Lloyd,Shor,Devetak2} -- for more information, see e.g. review works \cite{Gyongyosi,Smith}. Despite extensive research, not many computable expressions are known, even for qubit channels.

In this paper, we show that it is possible to increase the classical capacity of quantum channels via stationary state manipulation. The stationary states of generalized amplitude damping channels can be engineered by considering classical mixtures of legitimate memory kernels, for which we develop construction methods. Interestingly, it turns out that identical results are achievable if one instead mixes the corresponding time-local generators or dynamical maps. By increasing the non-unitality degree of dynamical maps, we prove that one obtains an absolute increase in both the Holevo and entanglement-assisted capacities. For non-unital channels, the classical capacity temporary exceeds even the entanglement-assisted classical capacity of unital channels. This proves that non-unitality is an important quantum resource for information transition purposes, as it can help to generate higher capacities than quantum entanglement.

\section{Phase-covariant channels}

Consider a class of non-unital qubit maps $\Lambda$ characterized by their covariance property
\begin{equation}\label{cov_def}
\Lambda\big[U(\phi)X U^\dagger(\phi)\big] = U(\phi)\Lambda[X]U^\dagger(\phi)
\end{equation}
that holds for all operators $X \in \mathcal{B}(\mathcal{H})$, $\mathcal{H}\simeq\mathbb{C}^2$, and real parameters $\phi$.
Such maps are covariant with respect to phase rotations represented by a unitary transformation
\begin{equation}
\label{eq:conds}
U(\phi)=\exp(-i\sigma_3\phi),\qquad\phi\in\mathbb{R},\qquad\sigma_3=\begin{pmatrix}
1 & 0\\0 & -1\end{pmatrix},
\end{equation}
which warrants the name {\it phase-covariant maps}. In addition, if $\Lambda$ are completely positive and trace-preserving, they are referred to as a {\it phase-covariant channels}. They are used to describe quantum evolution that combines pure dephasing with energy emission and absorption \cite{phase-cov-PRL,phase-cov}. Examples include thermalization and dephasing processes beyond the Markovian approximation \cite{PC1} and a weakly-coupled spin-boson model under the secular approximation \cite{PC3}.

The most general form of the phase-covariant channel reads, up to the unitary rotation $\rho\mapsto\exp(-i\sigma_3\theta)\rho\exp(i\sigma_3\theta)$
\cite{phase-cov,phase-cov-PRL},
\begin{equation}
\Lambda[\rho]=\frac 12 \left[(\mathbb{I}+\lambda_{\ast}\sigma_3)\Tr\rho
+\lambda_1\sigma_1\Tr(\rho\sigma_1)+\lambda_1\sigma_2\Tr(\rho\sigma_2)
+\lambda_3\sigma_3\Tr(\rho\sigma_3)\right],
\end{equation}
where $\sigma_\alpha$ denote the Pauli matrices. The real parameters $\lambda_1$, $\lambda_3$, $\lambda_{\ast}$ appear in the eigenvalue equations
\begin{equation}\label{rhoast}
\Lambda[\sigma_1]=\lambda_1\sigma_1,\qquad
\Lambda[\sigma_2]=\lambda_1\sigma_2,\qquad
\Lambda[\sigma_3]=\lambda_3\sigma_3,\qquad
\Lambda[\rho_\ast]=\rho_\ast=\frac{1}{2}\left[\mathbb{I}+\frac{\lambda_{\ast}}{1-\lambda_3}\sigma_3\right].
\end{equation}
The density operator $\rho_\ast$ is the stationary state of the channel, which means that it is preserved during the evolution. Note that for $\lambda_\ast=0$, one recovers the maximally mixed state $\rho_\ast=\mathbb{I}/2$, and so the associated $\Lambda$ is unital. Therefore, $\lambda_{\ast}$ controls its non-unitality property. Finally, $\lambda_1$, $\lambda_3$, $\lambda_\ast$ must satisfy the complete positivity conditions \cite{phase-cov}
\begin{equation}
|\lambda_\ast|+|\lambda_3|\leq 1,\qquad 4\lambda_1^2+\lambda_\ast^2\leq (1+\lambda_3)^2.
\end{equation}

To describe the time-evolution of open quantum systems, it is necessary to introduce a one-parameter family $\Lambda(t)$ of quantum channels that satisfy the initial condition $\Lambda(0)=\oper$. Now, $\Lambda(t)$ is called a {\it dynamical map}. The phase-covariant dynamical maps are solutions of the master equations $\dot{\Lambda}(t)=\mathcal{L}(t)\Lambda(t)$ with the time-local generators of the form
\begin{equation}
\mathcal{L}(t)=\gamma_+(t)\mathcal{L}_++\gamma_-(t)\mathcal{L}_-
+\gamma_3(t)\mathcal{L}_3,
\end{equation}
where
\begin{equation}\label{Lpm3}
\mathcal{L}_\pm[X]=\sigma_\pm X\sigma_\mp -\frac 12 \{\sigma_\mp\sigma_\pm,X\},
\qquad \mathcal{L}_3[X]=\frac 14(\sigma_3X\sigma_3-X).
\end{equation}
Note that $\mathcal{L}(t)$ includes, as special cases, the generators of amplitude damping ($\gamma_+(t)=\gamma_3(t)=0$), inverse amplitude damping ($\gamma_-(t)=\gamma_3(t)=0$), generalized amplitude damping ($\gamma_3(t)=0$), and pure dephasing ($\gamma_\pm(t)=0$).
We find the correspondence between the channel parameters and the decoherence rates $\gamma_\pm(t)$, $\gamma_3(t)$ \cite{phase-cov};
\begin{equation}
\lambda_1(t)=\exp\left\{-\frac 12 \Big[\Gamma_+(t)+\Gamma_-(t)+\Gamma_3(t)\Big]\right\},\qquad
\lambda_3(t)=\exp\Big[-\Gamma_+(t)-\Gamma_-(t)\Big],
\end{equation}
\begin{equation}
\lambda_\ast(t)=\exp\Big[-\Gamma_+(t)-\Gamma_-(t)\Big]\int_0^t
\Big[\gamma_+(\tau)-\gamma_-(\tau)\Big]\exp\Big[\Gamma_+(\tau)+\Gamma_-(\tau)\Big]
\der\tau,
\end{equation}
with $\Gamma_\mu(t)=\int_0^t\gamma_\mu(\tau)\der\tau$. For constant in time positive rates, $\Lambda(t)$ is the Markovian semigroup.

\section{Memory kernel master equations}

In an alternative approach, quantum evolution is provided by the Nakajima-Zwanzig equation \cite{Nakajima,Zwanzig}
\begin{equation}
\dot{\Lambda}(t)=\int_0^tK(t-\tau)\Lambda(\tau)\der\tau,
\end{equation}
where $K(t)$ is the memory kernel responsible for encoding memory effects. For the phase-covariant channels, a natural choice of the kernel is
\begin{equation}\label{kernel}
K(t)=k_+(t)\mathcal{L}_++k_-(t)\mathcal{L}_-+k_3(t)\mathcal{L}_3,
\end{equation}
where $\mathcal{L}_\pm$ and $\mathcal{L}_3$ are defined as in eq. (\ref{Lpm3}). Observe that $K(t)$ acts on the Pauli matrices as in
\begin{equation}
K(t)[\sigma_1]=\kappa_1(t)\sigma_1,\qquad K(t)[\sigma_2]=\kappa_1(t)\sigma_2,
\qquad K(t)[\sigma_3]=\kappa_3(t)\sigma_3,\qquad K(t)[\mathbb{I}]=\kappa_\ast(t)\sigma_3.
\end{equation}
The functions $\kappa_1(t)$, $\kappa_3(t)$, $\kappa_\ast(t)$ are related to the kernel parameters in the following way,
\begin{equation}\label{kappak}
\kappa_1(t)=-\frac 12 [k_+(t)+k_-(t)+k_3(t)],\qquad
\kappa_3(t)=-[k_+(t)+k_-(t)],\qquad
\kappa_\ast(t)=k_+(t)-k_-(t).
\end{equation}
Observe that $\kappa_\ast(t)$ depends on the difference of $k_\pm(t)$ rather than the sum. Therefore, it is possible to have $\kappa_\ast(t)=0$ (a unital channel) and $\kappa_\alpha(t)\neq 0$ by manipulating only a single kernel parameter. The inverse relation reads
\begin{equation}
k_\pm(t)=\frac{-\kappa_3(t)\pm\kappa_\ast(t)}{2},\qquad
k_3(t)=\kappa_3(t)-2\kappa_1(t).
\end{equation}
Now, instead of solving the dynamical equation for operators, one can equivalently solve a system of equations for functions,
\begin{equation}
\left\{
\begin{aligned}
\dot{\lambda}_1(t)&=\int_0^t\kappa_1(t-\tau)\lambda_1(\tau)\der\tau,\\
\dot{\lambda}_3(t)&\int_0^t\kappa_3(t-\tau)\lambda_3(\tau)\der\tau,\\
\dot{\lambda}_\ast(t)&=\int_0^t\kappa_3(t-\tau)\lambda_\ast(\tau)\der\tau+\int_0^t\kappa_\ast(\tau)\der\tau.
\end{aligned}
\right.
\end{equation}
Note that the integro-differential equation for $\lambda_\ast(t)$ is inhomogeneous. Moreover, it is $\kappa_\ast(t)$ that determines this inhomogenity, and the homogeneous part instead depends on $\kappa_3(t)$. In the Laplace transform domain, the solutions to the above system of dynamical equations are given by
\begin{equation}\label{LT}
\widetilde{\lambda}_1(s)=\frac{1}{s-\widetilde{\kappa}_1(s)},\qquad
\widetilde{\lambda}_3(s)=\frac{1}{s-\widetilde{\kappa}_3(s)},\qquad
\widetilde{\lambda}_\ast(s)=\frac{\widetilde{\kappa}_\ast(s)}{s}\frac{1}{s-\widetilde{\kappa}_3(s)}.
\end{equation}
From the properties of Laplace transforms, it follows that $\lambda_\ast(t)$ is related to $\lambda_3(t)$ via a convolution,
\begin{equation}
\lambda_\ast(t)=(K_\ast\ast\lambda_3)(t)=\int_0^tK_\ast(t-\tau)\lambda_3(\tau)\der\tau,
\qquad K_\ast(t)=\int_0^t\kappa_\ast(\tau)\der\tau.
\end{equation}
This formula is homogeneous, however the rght hand-side depends on $\lambda_3(t)$ instead of $\lambda_\ast(t)$.

Following the methods developed for the Pauli and generalized Pauli channels \cite{chlopaki,memory}, we parameterize the channel eigenvalues via
\begin{equation}
\lambda_j(t)=1-\int_0^t\ell_j(\tau)\der\tau,\qquad j=1,3.
\end{equation}
Additionally, we introduce the function $\ell_\ast(t)$ such that $\lambda_\ast(t)=-\int_0^t\ell_\ast(\tau)\der\tau$.

\begin{Theorem}\label{TH1}
The memory kernel $K(t)$ defined in eq. (\ref{kernel}) produces a legitimate phase-covariant dynamical map $\Lambda(t)$ if
\begin{equation}
\left\{
\begin{aligned}
\widetilde{\kappa}_1(s)&=-\frac{s\widetilde{\ell}_1(s)}{1-\widetilde{\ell}_1(s)},\\
\widetilde{\kappa}_3(s)&=-\frac{s\widetilde{\ell}_3(s)}{1-\widetilde{\ell}_3(s)},\\
\widetilde{\kappa}_\ast(s)&=-\frac{s\widetilde{\ell}_\ast(s)}{1-\widetilde{\ell}_3(s)},
\end{aligned}
\right.
\end{equation}
where
\begin{align}
\left|\int_0^t\ell_\ast(\tau)\der\tau\right|&\leq\int_0^t\ell_3(\tau)\der\tau,\label{C1}\\
\frac 12 \left(\int_0^t\ell_3(\tau)\der\tau
+\left|\int_0^t\ell_\ast(\tau)\der\tau\right|\right)
&\leq\int_0^t\ell_1(\tau)\der\tau\leq
2-\frac 12 \left(\int_0^t\ell_3(\tau)\der\tau
+\left|\int_0^t\ell_\ast(\tau)\der\tau\right|\right).\label{C2}
\end{align}
\end{Theorem}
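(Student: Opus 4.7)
The theorem contains two essentially independent claims: the Laplace-domain identities, which come from inverting eq.~(\ref{LT}) under the new integral parameterization, and the inequalities (\ref{C1})--(\ref{C2}), which encode complete positivity of the resulting map. I would prove them separately. The first claim is purely algebraic: substituting $\lambda_j(t)=1-\int_0^t\ell_j(\tau)\der\tau$ ($j=1,3$) and $\lambda_\ast(t)=-\int_0^t\ell_\ast(\tau)\der\tau$ gives, after Laplace transformation, $\widetilde{\lambda}_j(s)=[1-\widetilde{\ell}_j(s)]/s$ and $\widetilde{\lambda}_\ast(s)=-\widetilde{\ell}_\ast(s)/s$. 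Equating these to the representations in eq.~(\ref{LT}) and solving for $\widetilde{\kappa}_j(s)$ and $\widetilde{\kappa}_\ast(s)$ directly yields the displayed formulas; the $\widetilde{\kappa}_\ast$ case uses the already-derived $\widetilde{\kappa}_3$ and the simplification $1+\widetilde{\ell}_3(s)/[1-\widetilde{\ell}_3(s)]=1/[1-\widetilde{\ell}_3(s)]$.

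The second claim requires verifying that (\ref{C1})--(\ref{C2}) imply that $\Lambda(t)$ is CPTP for every $t\geq 0$. Trace preservation and hermiticity are automatic from the Lindblad-type structure of $K(t)$, so only the Choi-positivity inequalities $|\lambda_\ast(t)|+|\lambda_3(t)|\leq 1$ and $4\lambda_1(t)^2+\lambda_\ast(t)^2\leq[1+\lambda_3(t)]^2$ remain to be checked. Writing $L_j(t)=\int_0^t\ell_j(\tau)\der\tau$, condition (\ref{C1}) reads $|\lambda_\ast(t)|\leq L_3(t)=1-\lambda_3(t)$, which both forces $\lambda_3(t)\geq 0$ and delivers $|\lambda_\ast(t)|+|\lambda_3(t)|\leq 1$. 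Condition (\ref{C2}) is equivalent to the linear bound $2|\lambda_1(t)|+|\lambda_\ast(t)|\leq 1+\lambda_3(t)$, and squaring this and discarding the non-negative cross term $4|\lambda_1(t)\lambda_\ast(t)|$ recovers the quadratic CP inequality.

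The delicate point, which I would flag explicitly, is that (\ref{C1})--(\ref{C2}) are sufficient but strictly stronger than CP, because squaring throws away a non-negative term; equivalently, the linear bound $2|\lambda_1|+|\lambda_\ast|\leq 1+\lambda_3$ is a conservative tightening of the elliptic condition $4\lambda_1^2+\lambda_\ast^2\leq(1+\lambda_3)^2$. I would therefore motivate the choice of the linear bound by observing that it is the form naturally compatible with the integral parameterization borrowed from the Pauli-channel analysis of \cite{chlopaki,memory}, and that it guarantees CP at all $t$ in a way that decouples the individual admissibility conditions on $\ell_1$, $\ell_3$, $\ell_\ast$. Everything else in the argument is bookkeeping.
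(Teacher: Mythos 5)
Your proposal is correct and follows essentially the same route as the paper: the published proof is a two-line appeal to the integral parameterization (which, inverted through eq.~(\ref{LT}), gives the Laplace-domain formulas exactly as you compute them) and to the sufficient-but-not-necessary linear complete-positivity conditions $\lambda_3+|\lambda_\ast|\leq 1$ and $1-2|\lambda_1|+\lambda_3-|\lambda_\ast|\geq 0$ cited from \cite{phase-cov}, which are precisely the translations of (\ref{C1}) and (\ref{C2}) you write down; your squaring argument merely fills in the reduction to the quadratic CP condition that the paper delegates to the reference. One small slip worth fixing: $|\lambda_\ast|\leq 1-\lambda_3$ forces $\lambda_3\leq 1$, not $\lambda_3\geq 0$, so (\ref{C1}) alone does not yield $|\lambda_3|+|\lambda_\ast|\leq 1$ when $\lambda_3<0$; in that case use the linear bound from (\ref{C2}), $|\lambda_\ast|\leq 1+\lambda_3-2|\lambda_1|\leq 1+\lambda_3$, which gives $|\lambda_\ast|-\lambda_3\leq 1$ and closes the gap.
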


\begin{proof}
The proof follows directly from the parameterization of $\lambda_j(t)$ and $\lambda_\ast(t)$, as well as the sufficient (but not necessary) linear conditions for the complete positivity of $\Lambda(t)$ \cite{phase-cov},
\begin{equation}
\lambda_3+|\lambda_\ast|\leq 1,\qquad 1-2|\lambda_1|+\lambda_3-|\lambda_\ast|\geq 0.
\end{equation}
\end{proof}

\begin{Example}
To illustrate our results, let us consider exponentially decaying functions $\ell_\mu(t)=\eta e^{-\xi_\mu t}$, $\mu\in\{1,3,\ast\}$, where $\xi_\ast\geq\xi_3\geq\xi_1\geq\eta\geq 0$.
It is straightforward to check that these functions satisfy the conditions from Theorem \ref{TH1}. Indeed, eq. (\ref{C1}) holds due to
\begin{equation}
h(\xi,t)=\frac{1}{\xi}\left(1-e^{-\xi t}\right)
\end{equation}
monotonically decreasing with the increase of $\xi$ for all $t\geq 0$ \cite{memory}. Using the same argumentation, one also proves eq. (\ref{C2}). The associated memory kernel $K(t)$ is defined via
\begin{align}
\kappa_1(t)&=\eta\left[-\delta(t)+(\xi_1-\eta)e^{-(\xi_1-\eta)t}\right],\\
\kappa_3(t)&=\eta\left[-\delta(t)+(\xi_3-\eta)e^{-(\xi_3-\eta)t}\right],\\
\kappa_\ast(t)&=\eta\left\{-\delta(t)+\frac{1}{\eta+\xi_\ast-\xi_3}
\left[\eta(\xi_3-\eta)e^{-(\xi_3-\eta)t}+\xi_\ast(\xi_\ast-\xi_3)e^{-\xi_\ast t}\right]\right\},
\end{align}
produces the phase-covariant dynamical map with
\begin{align}
\lambda_1(t)&=1-\frac{\eta}{\xi_1}\left(1-e^{-\xi_1t}\right),\\
\lambda_3(t)&=1-\frac{\eta}{\xi_3}\left(1-e^{-\xi_3t}\right),\\
\lambda_\ast(t)&=-\frac{\eta}{\xi_\ast}\left(1-e^{-\xi_\ast t}\right).
\end{align}
\end{Example}

A simplified case follows from a single-function parameterization
\begin{equation}
\ell_j(t)=1-\frac{1}{a_j}\int_0^t\ell(\tau)\der\tau,\qquad
\ell_\ast(t)=\pm\frac{1}{a_\ast}\int_0^t\ell(\tau)\der\tau.
\end{equation}

\begin{Proposition}
Take positive numbers $a_1,a_3,a_\ast$ and a function $\ell(t)$ such that $\int_0^t\ell(\tau)\der\tau\geq 0$. If
\begin{equation}
a_3\leq a_\ast,\qquad a_1\leq\frac{2a_3a_\ast}{a_3+a_\ast},\qquad
\int_0^t\ell(\tau)\der\tau\leq4\left(\frac{2}{a_1}+\frac{1}{a_3}+\frac{1}{a_\ast}\right)^{-1},
\end{equation}
then the corresponding memory kernel with
\begin{equation}
\left\{
\begin{aligned}
\widetilde{\kappa}_1(s)&=-\frac{s\widetilde{\ell}(s)}{a_1-\widetilde{\ell}(s)},\\
\widetilde{\kappa}_3(s)&=-\frac{s\widetilde{\ell}(s)}{a_3-\widetilde{\ell}(s)},\\
\widetilde{\kappa}_\ast(s)&=\pm\frac{sa_3\widetilde{\ell}(s)}{a_3a_\ast-a_\ast\widetilde{\ell}(s)}
\end{aligned}
\right.
\end{equation}
defines a legitimate phase-covariant qubit dynamical map characterized by
\begin{equation}
\lambda_j(t)=1-\frac{1}{a_j}\int_0^t\ell(\tau)\der\tau,\quad j=1,3,\qquad
\lambda_\ast(t)=\pm\int_0^t\ell_\ast(\tau)\der\tau.
\end{equation}
\end{Proposition}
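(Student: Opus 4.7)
The plan is to apply Theorem \ref{TH1} directly by specializing $\ell_1,\ell_3,\ell_\ast$ to the single-function ansatz. Comparing the stated eigenvalues $\lambda_j(t)=1-\frac{1}{a_j}\int_0^t\ell(\tau)\der\tau$ ($j=1,3$) and $\lambda_\ast(t)=\pm\frac{1}{a_\ast}\int_0^t\ell(\tau)\der\tau$ with the parameterizations $\lambda_j(t)=1-\int_0^t\ell_j(\tau)\der\tau$ and $\lambda_\ast(t)=-\int_0^t\ell_\ast(\tau)\der\tau$ introduced earlier, one reads off the identifications $\ell_j(\tau)=\ell(\tau)/a_j$ for $j=1,3$ and $\ell_\ast(\tau)=\mp\ell(\tau)/a_\ast$. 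With this dictionary in hand, the proof splits into two independent checks: that the recipe of Theorem \ref{TH1} for $\widetilde{\kappa}_\alpha(s)$ reproduces the three formulas stated in the proposition, and that the admissibility conditions (\ref{C1})--(\ref{C2}) reduce to the three displayed inequalities on $a_1,a_3,a_\ast$ and $\int_0^t\ell(\tau)\der\tau$.

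The first check is pure Laplace-domain algebra. Linearity gives $\widetilde{\ell}_j(s)=\widetilde{\ell}(s)/a_j$ and $\widetilde{\ell}_\ast(s)=\mp\widetilde{\ell}(s)/a_\ast$, so substituting into Theorem \ref{TH1} and clearing denominators produces $\widetilde{\kappa}_j(s)=-s\widetilde{\ell}(s)/[a_j-\widetilde{\ell}(s)]$ for $j=1,3$, together with
\begin{equation*}
\widetilde{\kappa}_\ast(s)=\pm\frac{s\widetilde{\ell}(s)/a_\ast}{1-\widetilde{\ell}(s)/a_3}=\pm\frac{sa_3\widetilde{\ell}(s)}{a_3a_\ast-a_\ast\widetilde{\ell}(s)},
\end{equation*}
exactly as claimed. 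No subtlety beyond bookkeeping is involved.

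For the second check, set $L(t):=\int_0^t\ell(\tau)\der\tau\geq 0$. Then $\int_0^t\ell_j(\tau)\der\tau=L(t)/a_j$ and $\bigl|\int_0^t\ell_\ast(\tau)\der\tau\bigr|=L(t)/a_\ast$, so condition (\ref{C1}) reduces to $L(t)/a_\ast\leq L(t)/a_3$, which is equivalent to $a_3\leq a_\ast$ (the $L(t)=0$ case being trivial). The lower half of (\ref{C2}) becomes $\tfrac12 L(t)(1/a_3+1/a_\ast)\leq L(t)/a_1$, equivalent to $a_1\leq 2a_3a_\ast/(a_3+a_\ast)$, while the upper half becomes $L(t)\bigl(1/a_1+1/(2a_3)+1/(2a_\ast)\bigr)\leq 2$, i.e.\ $L(t)\leq 4\bigl(2/a_1+1/a_3+1/a_\ast\bigr)^{-1}$. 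These are precisely the three hypotheses of the proposition.

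I do not anticipate a genuine obstacle: the whole argument is a substitution into Theorem \ref{TH1} followed by elementary arithmetic. The only delicate point worth emphasising is that cancelling $L(t)$ from the first two inequalities without reversing their sense requires $L(t)\geq 0$, which is exactly the assumption $\int_0^t\ell(\tau)\der\tau\geq 0$ placed on $\ell$; once this is noted, assembling the verified Laplace-domain identities with the three admissibility inequalities and invoking Theorem \ref{TH1} closes the argument.
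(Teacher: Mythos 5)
Your proof is correct and follows the route the paper evidently intends: the Proposition is stated without proof precisely because it is the direct specialization of Theorem~1 to $\ell_j=\ell/a_j$, $\ell_\ast=\mp\ell/a_\ast$, and your Laplace-domain substitution plus the reduction of conditions (\ref{C1})--(\ref{C2}) to the three stated inequalities (using $\int_0^t\ell(\tau)\der\tau\geq 0$ and $a_1,a_3,a_\ast>0$ to preserve inequality directions) is exactly that argument.
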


\section{Improving classical capacities of dynamical maps}

In this section, we analyze how the choice of a stationary state of a quantum dynamical map, which is strictly related to its non-unitality degree, influences its classical capacities. First, let us justify the choice of a family of quantum channels. To exclude the influence of other effects, assume that the phase-covariant qubit channel $\Lambda$ has fixed eigenvalues and only the parameter $\lambda_\ast$ can be varied. According to ref. \cite{engineering_GAD}, there is a simple connection between $\lambda_\ast$ and the non-unitality measure
\begin{equation}
\mathrm{NU}(\Lambda)=\frac{|\lambda_\ast|}{1-|\lambda_3|},
\end{equation}
which determines the degree of non-unitality for phase-covariant $\Lambda$. It is easy to check that $\mathrm{NU}(\Lambda)=0$ and $\mathrm{NU}(\Lambda)=1$ correspond to unital and maximally non-unital maps, respectively.
Recall that $\lambda_\ast$ also appears in the definition of the stationary state $\rho_\ast$ in eq. (\ref{rhoast}), and hence the non-unitality degree $\mathrm{NU}(\Lambda)$ changes through the manipulations on $\rho_\ast$. Actually, unital channels are associated with maximally mixed stationary states, whereas maximally non-unital channels correspond to pure stationary states.

As a case study, consider the generalized amplitude damping channel, for which the Holevo and entanglement-assisted classical capacities are known. Let us introduce the following parametrization,
\begin{equation}\label{GADC}
\lambda_1=\lambda,\qquad \lambda_3=\lambda^2,\qquad \lambda_\ast=p(1-\lambda^2),
\end{equation}
with $-1\leq p\leq 1$, so that $\mathrm{NU}(\Lambda)=|p|$. Now, both the non-unitality degree and the stationary state $\rho_\ast={\rm{diag}}(1+p,1-p)/2$  are easily controlled via the parameter $p$. Notably, one recovers the amplitude damping channel for $p=-1$, inverse amplitude damping for $p=1$, and the unital channel for $p=0$.

Any generalized amplitude damping channel characterized by eq. (\ref{GADC}) can be constructed through a classical mixture of two quantum channels. Recall that if two dynamical maps are physically admissible, then so is their convex combination. Now, denote the unital (Pauli) channel by $\Lambda_{\rm U}$ and the maximally non-unital channel corresponding to $p=\pm 1$ by $\Lambda_{\rm NU}^{\pm}$. Assume that $\Lambda_{\rm U}$ and $\Lambda_{\rm NU}$ have all common eigenvalues and three common eigenvectors, as in
\begin{equation}
\Lambda_{\rm U}[\sigma_k]=\lambda_k\sigma_k,\qquad 
\Lambda_{\rm NU}^{\pm}[\sigma_k]=\lambda_k\sigma_k,\qquad k=1,2,3\qquad (\lambda_2\equiv\lambda_1).
\end{equation}
The final eigenvectors follow from
\begin{equation}
\Lambda_{\rm U}[\mathbb{I}]=\mathbb{I},\qquad
\Lambda_{\rm NU}^+[|0\>\<0|]=|0\>\<0|,\qquad
\Lambda_{\rm NU}^-[|1\>\<1|]=|1\>\<1|.
\end{equation}
Now, the quantum channels resulting from the mixtures
\begin{equation}\label{mixtures}
\Lambda^{\pm}=(1-|p|)\Lambda_{\rm U}+|p|\Lambda_{\rm NU}^{\pm}
\end{equation}
cover the entire class of phase-covariant channels defined in eq. (\ref{GADC}). Experimentally, they can be realized using the Mach-Zender interferometer, where $\Lambda_{\rm U}$ and $\Lambda_{\rm NU}^{\pm}$ operate on their own paths, and $p$ is characterized by the properties of a beam splitter \cite{Uriri,Siltanen}.

In the following subsections, we analyze capacities of dynamical maps of the form given in eq. (\ref{mixtures}). In other words, we consider
\begin{equation}\label{mixtures2}
\Lambda^{\pm}(t)=(1-|p|)\Lambda_{\rm U}(t)+|p|\Lambda_{\rm NU}^{\pm}(t)
\end{equation}
such that
\begin{equation}\label{eig}
\lambda_1(t)=\lambda(t),\qquad \lambda_3(t)=\lambda^2(t),\qquad \lambda_\ast(t)=p(1-\lambda^2(t)),
\end{equation}
and $\lambda(0)=1$. This way, the entire time-dependence is encoded in the function $\lambda(t)$, and the parameter $p$ remains constant in time.

The master equations for $\Lambda^{\pm}(t)$ display some interesting properties.

\begin{Proposition}
The mixtures of dynamical maps from eq. (\ref{mixtures2}) admit the master equations
\begin{equation}
\dot{\Lambda}^{\pm}(t)=\mathcal{L}^{\pm}(t)\Lambda^{\pm}(t)
\end{equation}
with time-local generators
\begin{equation}\label{mixL}
\mathcal{L}^{\pm}(t)=(1-|p|)\mathcal{L}_{\rm U}(t)+|p|\mathcal{L}_{\rm NU}^{\pm}(t)
\end{equation}
that are analogical mixtures of the generators $K_{\rm U}(t)$ and $K_{\rm NU}^{\pm}(t)$ for $\Lambda_{\rm U}(t)$ and $\Lambda_{\rm NU}^{\pm}(t)$, respectively.
\end{Proposition}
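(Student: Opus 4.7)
My plan is to verify the master equation $\dot{\Lambda}^{\pm}(t)=\mathcal{L}^{\pm}(t)\Lambda^{\pm}(t)$ by direct computation in the Pauli basis, exploiting the fact that all three maps involved are simultaneously block-diagonalized on $\mathrm{span}\{\sigma_1,\sigma_2,\sigma_3\}$. In the ordered basis $\{\mathbb{I},\sigma_1,\sigma_2,\sigma_3\}$, any phase-covariant channel considered here is represented by a lower-triangular $4\times 4$ matrix with diagonal $(1,\lambda_1,\lambda_1,\lambda_3)$ and a single off-diagonal entry $\lambda_\ast$ in the $(\sigma_3,\mathbb{I})$ slot (read off from eq. (\ref{rhoast})). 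By construction $\Lambda_{\rm U}(t)$ and $\Lambda_{\rm NU}^{\pm}(t)$ share the diagonal block and differ only through $\lambda_\ast^{\rm U}(t)=0$ versus $\lambda_\ast^{\rm NU,\pm}(t)=\pm(1-\lambda_3(t))$. Linearity of the convex combination then gives $\Lambda^{\pm}(t)$ the same diagonal block together with $\lambda_\ast^{\pm}(t)=\pm|p|(1-\lambda_3(t))$, recovering the eigenvalue structure of eq. (\ref{eig}).

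Next I would compute $\mathcal{L}(t)=\dot{\Lambda}(t)\Lambda^{-1}(t)$ for each map. The triangular structure makes $\Lambda^{-1}(t)$ immediate: one inverts the diagonal block entrywise and replaces the off-diagonal entry $\lambda_\ast$ with $-\lambda_\ast/\lambda_3$. Multiplying by $\dot{\Lambda}(t)$ produces diagonal entries $\dot{\lambda}_1/\lambda_1$ and $\dot{\lambda}_3/\lambda_3$, common to all three maps because they share $\lambda_1(t)$ and $\lambda_3(t)$, together with the single off-diagonal entry $\dot{\lambda}_\ast-(\dot{\lambda}_3/\lambda_3)\lambda_\ast$, which is a linear functional of $\lambda_\ast(t)$. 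Since $\lambda_\ast^{\pm}(t)$ is by definition the $(1-|p|,|p|)$-convex combination of $\lambda_\ast^{\rm U}(t)$ and $\lambda_\ast^{\rm NU,\pm}(t)$, this linearity forces the off-diagonal entry of $\mathcal{L}^{\pm}(t)$ to be the analogous convex combination of those of $\mathcal{L}_{\rm U}(t)$ and $\mathcal{L}_{\rm NU}^{\pm}(t)$. Reading entries back off in the generator basis of eq. (\ref{Lpm3}) then yields the claimed identity (\ref{mixL}).

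The main obstacle to keep in mind is that convex combinations of generators do not in general produce the generator of the mixed map: the assignment $\Lambda\mapsto\dot{\Lambda}\Lambda^{-1}$ is nonlinear, and for generic constituents the inverse $\Lambda^{-1}$ does not respect convexity. What rescues us here is the rigid common spectral structure of $\Lambda_{\rm U}(t)$ and $\Lambda_{\rm NU}^{\pm}(t)$: because $\lambda_1(t)$ and $\lambda_3(t)$ coincide across the summands, the diagonal factors of $\Lambda^{-1}(t)$ are the same and effectively commute through the convex combination, leaving only the $p$-dependent inhomogeneous piece, which is itself linear in $\lambda_\ast$. Equivalently, in terms of the decoherence rates, one checks that $\gamma_+(t)+\gamma_-(t)$ and $\gamma_3(t)$ depend only on $\lambda_1(t),\lambda_3(t)$ and are thus identical for all three maps, while $\gamma_+(t)-\gamma_-(t)$ is a linear functional of $\lambda_\ast(t)$, so mixing dynamical maps and mixing generators yield the same object.
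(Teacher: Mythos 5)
Your proof is correct, and it reaches the identity (\ref{mixL}) by a genuinely different route than the paper. The paper's proof is a direct exhibition: it writes the generators explicitly for the specific parametrization of eq. (\ref{eig}), namely $\mathcal{L}_{\rm U}(t)=-(\dot{\lambda}/\lambda)(\mathcal{L}_++\mathcal{L}_-)$ and $\mathcal{L}_{\rm NU}^{\pm}(t)=-2(\dot{\lambda}/\lambda)\mathcal{L}_{\pm}$, forms the convex combination $-(\dot{\lambda}/\lambda)[(1+p)\mathcal{L}_++(1-p)\mathcal{L}_-]$, and then checks (via the integral formulas relating $\lambda_1,\lambda_3,\lambda_\ast$ to the rates $\gamma_\pm$) that this generator reproduces the eigenvalues of eq. (\ref{eig}). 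You instead work in the Pauli transfer-matrix picture and isolate the structural reason the mixture survives the generically nonlinear assignment $\Lambda\mapsto\dot{\Lambda}\Lambda^{-1}$: the three maps share the diagonal block $(1,\lambda_1,\lambda_1,\lambda_3)$, so the diagonal part of the generator is common to all of them, while the only map-dependent entry, $\dot{\lambda}_\ast-(\dot{\lambda}_3/\lambda_3)\lambda_\ast$, is linear in $\lambda_\ast$ and therefore respects the convex combination; reading the result back in the basis $\{\mathcal{L}_+,\mathcal{L}_-,\mathcal{L}_3\}$ of eq. (\ref{Lpm3}) (which spans exactly the relevant triangular matrices) gives (\ref{mixL}) as an identity of superoperators. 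Your version buys more generality — it shows the statement holds for any two phase-covariant families sharing $\lambda_1(t)$ and $\lambda_3(t)$, not only the generalized amplitude damping case $\lambda_3=\lambda^2$ with $\gamma_3=0$ — and it makes explicit the obstruction (nonlinearity of $\Lambda\mapsto\dot{\Lambda}\Lambda^{-1}$) that the paper leaves unaddressed. What the paper's explicit computation buys in exchange is the closed form of the mixed generator with rates $\gamma_\pm\propto(1\pm p)$, which directly supports the subsequent remark that a mixture of Markovian semigroups is again a semigroup.
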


\begin{proof}
First, observe that $\Lambda_{\rm U}(t)$ and $\Lambda_{\rm NU}^{\pm}(t)$ are generated via
\begin{equation}
\mathcal{L}_{\rm U}(t)=-\frac{\dot{\lambda}(t)}{\lambda(t)}(\mathcal{L}_+
+\mathcal{L}_-),\qquad
\mathcal{L}_{\rm NU}^{\pm}(t)=-2\frac{\dot{\lambda}(t)}{\lambda(t)}\mathcal{L}_{\pm},
\end{equation}
respectively. These time-local generators can be mixed according to eq. (\ref{mixL}) into
\begin{equation}
\mathcal{L}(t)=-\frac{\dot{\lambda}(t)}{\lambda(t)}\left[(1+p)\mathcal{L}_++(1-p)\mathcal{L}_-\right]
\end{equation}
which indeed gives rise to the dynamical map $\Lambda(t)$ with eigenvalues as in eq. (\ref{eig}).
\end{proof}

In particular, note that a mixture of two Markovian semigroups leads back to a semigroup for any value of $p$.

Time-local generators obtainable through classical mixtures of physical generators generalize the notion of additive generators $\mathcal{L}(t)=\alpha\mathcal{L}^{(1)}(t)+\beta\mathcal{L}^{(2)}(t)$, which must be legitimate for any $\alpha,\beta\geq 0$ \cite{Kolodynski}. It has been shown that additive generators provide a physically valid evolution for either Markovian or commutative, semigroup-simulable $\mathcal{L}^{(k)}(t)$. However, such generators correspond to the dynamics derived microscopically from the system interacting with multiple environments if and
only if the cross-correlations between environments can be ignored, which is
the case for weak-coupling regime.

\begin{Proposition}
The mixtures of dynamical maps from eq. (\ref{mixtures2}) admit the memory kernel master equations
\begin{equation}
\dot{\Lambda}^{\pm}(t)=\int_0^tK^{\pm}(t-\tau)\Lambda^{\pm}(\tau)\der\tau,
\end{equation}
where the memory kernels
\begin{equation}\label{mixK}
K^{\pm}(t)=(1-|p|)K_{\rm U}(t)+|p|K_{\rm NU}^{\pm}(t)
\end{equation}
are analogical mixtures of the memory kernels $K_{\rm U}(t)$ and $K_{\rm NU}^{\pm}(t)$ that generate $\Lambda_{\rm U}(t)$ and $\Lambda_{\rm NU}^{\pm}(t)$, respectively.
\end{Proposition}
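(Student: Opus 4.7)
The plan is to reduce the operator identity (\ref{mixK}) to three scalar identities for $\kappa_1(t)$, $\kappa_3(t)$, $\kappa_\ast(t)$ via the decomposition (\ref{kernel}) together with its inverse (\ref{kappak}), and then to verify each of these identities componentwise in the Laplace domain using (\ref{LT}). Linearity of the Laplace transform and of the relations (\ref{kappak}) ensures that if the scalar identities hold then so does the full kernel identity.

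The starting observation is that eq. (\ref{eig}) makes $\lambda_1(t)=\lambda(t)$ and $\lambda_3(t)=\lambda^2(t)$ identical functions for all three maps $\Lambda_{\rm U}(t)$, $\Lambda_{\rm NU}^{\pm}(t)$, and their mixture $\Lambda^{\pm}(t)$. Inverting the first two formulas in (\ref{LT}) gives $\widetilde{\kappa}_1(s)=s-1/\widetilde{\lambda}_1(s)$ and $\widetilde{\kappa}_3(s)=s-1/\widetilde{\lambda}_3(s)$, so $\widetilde{\kappa}_1$ and $\widetilde{\kappa}_3$ coincide across the three channels. The convex combination with weights $(1-|p|)$ and $|p|$ then reproduces them trivially, settling two of the three scalar identities.

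The only genuinely nontrivial component is $\kappa_\ast$. From the third relation in (\ref{LT}) one reads off $\widetilde{\kappa}_\ast(s)=s\widetilde{\lambda}_\ast(s)/\widetilde{\lambda}_3(s)$. Because $\lambda_{\ast,{\rm U}}\equiv 0$ we have $\widetilde{\kappa}_{\ast,{\rm U}}(s)=0$, while (\ref{eig}) yields $\lambda_\ast^{\pm}(t)=|p|\,\lambda_{\ast,{\rm NU}}^{\pm}(t)$ and, by the preceding paragraph, $\widetilde{\lambda}_3^{\pm}(s)=\widetilde{\lambda}_{3,{\rm NU}}^{\pm}(s)$. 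Substituting gives $\widetilde{\kappa}_\ast^{\pm}(s)=|p|\,\widetilde{\kappa}_{\ast,{\rm NU}}^{\pm}(s)=(1-|p|)\widetilde{\kappa}_{\ast,{\rm U}}(s)+|p|\,\widetilde{\kappa}_{\ast,{\rm NU}}^{\pm}(s)$. Inverse Laplace transforming and reassembling via (\ref{kappak}) then produces (\ref{mixK}).

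The main obstacle to flag is conceptual rather than computational: a convex combination of dynamical maps does not in general yield a convex combination of their memory kernels, because the Nakajima-Zwanzig relation $\widetilde{K}(s)=s-\widetilde{\Lambda}(s)^{-1}$ is nonlinear in $\Lambda$. The reason this nonlinearity is harmless for the family in (\ref{mixtures2}) is entirely structural: the two ingredients share identical $\lambda_1$ and $\lambda_3$ and differ only in $\lambda_\ast$, a component that enters $\widetilde{\kappa}_\ast$ \emph{linearly} once the common denominator $\widetilde{\lambda}_3$ is factored out. I therefore expect the actual write-up to reduce to cleanly isolating this shared structure rather than to wrestling with the convolution.
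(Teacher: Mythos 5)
Your proof is correct and follows essentially the same route as the paper: both arguments work in the Laplace domain, exploit that $\lambda_1(t)$ and $\lambda_3(t)$ (hence $\widetilde{\kappa}_1$, $\widetilde{\kappa}_3$) are common to $\Lambda_{\rm U}(t)$, $\Lambda_{\rm NU}^{\pm}(t)$ and the mixture, and reduce the claim to the observation that only the non-unital component mixes nontrivially --- the paper writes this out in the operator basis $\{\mathcal{L}_+,\mathcal{L}_-,\mathcal{L}_3\}$ with coefficients $k_\pm,k_3$, while you track the linearly equivalent components $\kappa_1,\kappa_3,\kappa_\ast$ of eq.~(\ref{kappak}). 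Your closing remark correctly identifies why the nonlinearity of $\widetilde{K}(s)$ in $\widetilde{\Lambda}(s)$ is harmless here, which is the implicit content of the paper's computation as well.
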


\begin{proof}
It is straightforward to check that $\Lambda_{\rm U}(t)$ and $\Lambda_{\rm NU}^{\pm}(t)$ arise from the master equations with the corresponding memory kernels (in the Laplace transform domain)
\begin{equation}
\widetilde{K}_{\rm U}(s)=-\frac{\widetilde{\kappa}_3(s)}{2}
(\mathcal{L}_++\mathcal{L}_-)+[\widetilde{\kappa}_3(s)-2\widetilde{\kappa}_1(s)]
\mathcal{L}_3,\qquad
\widetilde{K}_{\rm NU}^{\pm}(s)=-\widetilde{\kappa}_3(s)\mathcal{L}_{\pm}
+[\widetilde{\kappa}_3(s)-2\widetilde{\kappa}_1(s)]\mathcal{L}_3,
\end{equation}
where $\widetilde{\kappa}_3(s)=s-1/\widetilde{\lambda^2}(s)$ and $\widetilde{\kappa}_1(s)=s-1/\widetilde{\lambda}(s)$. Now, eq. (\ref{mixK}) corresponds to
\begin{equation}
\widetilde{K}(s)=-\frac{\widetilde{\kappa}_3(s)}{2}\left[(1+p)\mathcal{L}_++(1-p)\mathcal{L}_-\right]
+[\widetilde{\kappa}_3(s)-2\widetilde{\kappa}_1(s)]\mathcal{L}_3,
\end{equation}
which is exactly the memory kernel for the dynamical map $\Lambda(t)$ with eigenvalues given in eq. (\ref{eig}).
\end{proof}

Observe that the memory kernels possess an additional term proportional to $\mathcal{L}_3$, which is not present in the formulas for time-local generators. This term vanishes for Markovian semigroups. Incorporating non-local noises through memory kernel addition has been analyzed e.g. in refs. \cite{Marshall,fidelity,Engineering_capacity,engineering_GAD}. However, physical justification for such addition has not yet been studied on a microscopical level.

Based on eq. (\ref{mixtures2}), as well as Propositions 1 and 2, we make the following observation.

\begin{Remark}
The generalized amplitude damping dynamical maps defined in eq. (\ref{eig}) arise from classical mixtures taken on the level of dynamical maps, time-local generators, or memory kernels.
\end{Remark}

In the following subsections, we analyze the Holevo capacity, classical capacity, and entanglement-assisted classical capacity of generalized amplitude damping dynamical maps characterized by eq. (\ref{eig}). In particular, we focus on their time-evolution and behaviour under manipulations of the parameter $p$. We consider only mixtures with non-negative values of $p$, as both the Holevo and entanglement-assisted capacity are symmetric with respect to $p\mapsto-p$ \cite{GADC_Holevo,CE_GADC}.

\subsection{Holevo capacity}

The Holevo capacity is the maximal rate of classical information reliably transmittable under a single use of a quantum channel \cite{Holevo,sw}. For unital channels $\Lambda_{\rm U}$, a simple analytical formula is known \cite{KingQubit,Holevo_capacity},
\begin{equation}
\chi(\Lambda_{\rm U})=\frac{1+\lambda}{2}\log_2(1+\lambda)
+\frac{1-\lambda}{2}\log_2(1-\lambda).
\end{equation}
For the generalized amplitude damping channels, it has been shown that \cite{GADC_Holevo}
\begin{equation}
\chi(\Lambda)=\frac 12 [f(r)-f(q)].
\end{equation}
In our case, the function $f(x)$ and the parameter $r$ are defined by
\begin{equation}
f(x)=(1+x)\log_2(1+x)+(1-x)\log_2(1-x),\qquad
r=\sqrt{\lambda^2+q^2-\left(\frac{q-p}{\lambda}+p\lambda\right)^2},
\end{equation}
whereas the parameter $q$ is the solution of the implicit equation
\begin{equation}
f^\prime(r)(q-p)(1-\lambda^2)=-r\lambda^2f^\prime(q).
\end{equation}
The prime in $f^\prime(x)$ denotes the derivative over $x$.
For non-unital channels, only numerical solutions exist.

In Fig.1, we plot the evolution of Holevo capacity for $\lambda(t)=e^{-t}$ (left) and $\lambda(t)=\cos t$ (right) for various values of $p$. Due to the symmetries of the cosine function, the corresponding $\chi[\Lambda(t)]$ oscillates between $0$ and $1$ with the same periodicity. Observe that in both cases the Holevo capacity increases with $p$. In particular, because the Holevo capacity is equal to the classical capacity for unital maps $\Lambda_{\rm U}(t)$ (to be discussed in detail in Section 4.3), we can already see that $C[\Lambda(t)]$ increases with the degree of non-unitality at any $t>0$.

\FloatBarrier

\begin{figure}[htb!]
   \includegraphics[width=0.8\textwidth]{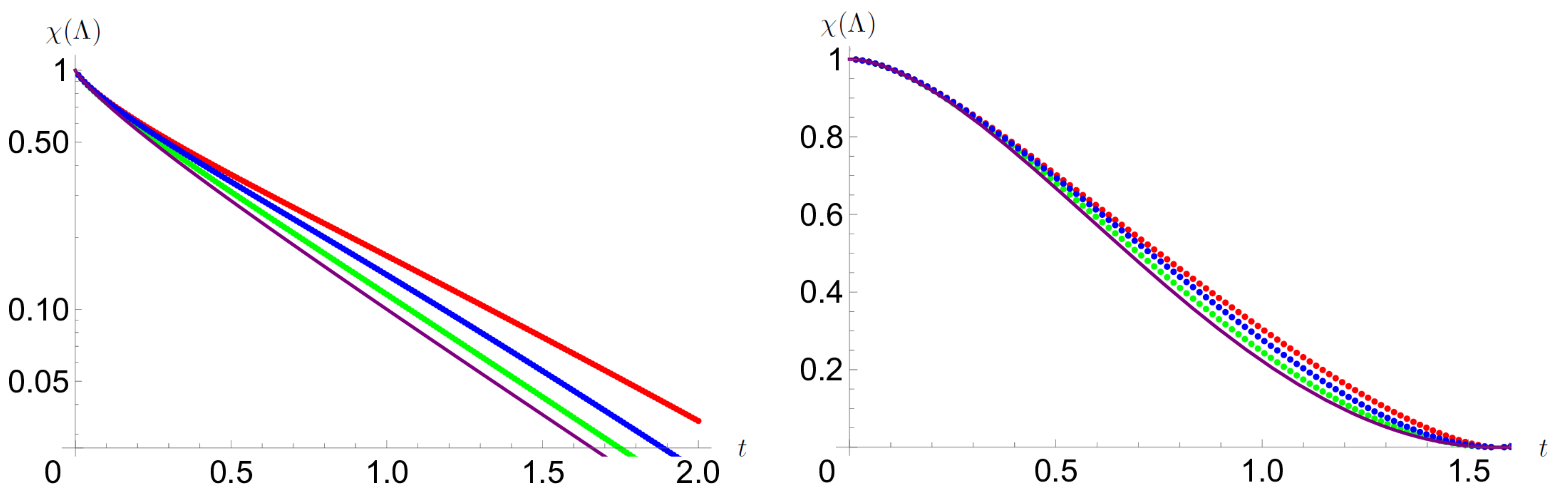}
\caption{Plots representing time-evolution of the Holevo capacity for exponentially decaying (left; logarithmic scale) and oscillating (right) function $\lambda(t)$. The color curves correspond to the parameter $p=0$ (purple), $p=2/3$ (green), $p=0.9$ (blue), and $p=1$ (red).}
\label{HC}
\end{figure}

\FloatBarrier

\subsection{Entanglement-assisted classical capacity}

If the classical information is sent under infinitely many uses of a quantum channel and the input states are allowed to be entangled, the maximal rate of reliably shared information is quantified by entanglement-assisted classical capacity \cite{Bennett}. An analytical formula has been recently found for the unital phase-covariant channels \cite{Unital_PCC},
\begin{equation}
C_E(\Lambda_{\rm U})=(1+\lambda)\log_2(1+\lambda)
+(1-\lambda)\log_2(1-\lambda)=2\chi(\Lambda_{\rm U}).
\end{equation}
The same relation was numerically proven for non-unital generalized amplitude damping channels \cite{CE_GADC}. Actually, for $\Lambda_{\rm U}$, the Holevo capacity is equal to its classical capacity $C(\Lambda_{\rm U})$ (to be discussed in detail in Section 4.3). Therefore, unital phase-covariant channels are another example for which $C_E(\Lambda)=2C(\Lambda)$. This exact relation was shown to hold for noiseless channels \cite{Wiesner} and quantum erasure channels \cite{Bennett2}. For generalized amplitude damping channels with $0<|p|<1$, the entanglement-assisted classical capacity is given by \cite{CE_GADC}
\begin{equation}
C_E(\Lambda)=\max_{-1\leq z\leq 1}F(\Lambda,z).
\end{equation}
The maximization is performed over
\begin{equation}
F(\Lambda,z):=H_2\left(\frac{1+z}{2}\right)+H_2\left(\frac{1+p+\lambda^2(z-p)}{2}\right)
+h_+\log_2h_++h_-\log_2h_-+\Delta_+\log_2\Delta_++\Delta_-\log_2\Delta_-,
\end{equation}
where $H_2(x):=-x\log_2 x-(1-x)\log_2(1-x)$ and
\begin{equation}
h_\pm=\frac{1\pm z}{4}(1-\lambda^2)(1\mp p),\qquad
\Delta_\pm=\frac 14 \left[1+\lambda^2+zp(1-\lambda^2)\pm\sqrt{4(\lambda^2+zp(1-\lambda^2))+(1-\lambda^2)^2
(p-z)^2}\right].
\end{equation}
If $p=\pm 1$, then the formula for the amplitude damping channel can be used \cite{ADC_quantum},
\begin{equation}
C_E(\Lambda_{\rm NU})=\max_{0\leq\pi\leq 1}
\left[H_2(\pi)+H_2(\pi\lambda^2)-H_2(\pi(1-\lambda^2))\right].
\end{equation}
Again, for non-unital channels, only numerical solutions have been found.

In Fig.2, we plot the evolution of entanglement-assisted classical capacity for $\lambda(t)=e^{-t}$ (left) and $\lambda(t)=\cos t$ (right) for various values of $p$. Again, the capacity for $\lambda(t)=\cos t$ oscillates between $0$ and $2$ with the same periodicity as the cosine function. Also, similarly to the Holevo capacity, $C_E[\Lambda(t)]$ increases along with the non-unitality degree $p$.

\FloatBarrier

\begin{figure}[htb!]
   \includegraphics[width=0.8\textwidth]{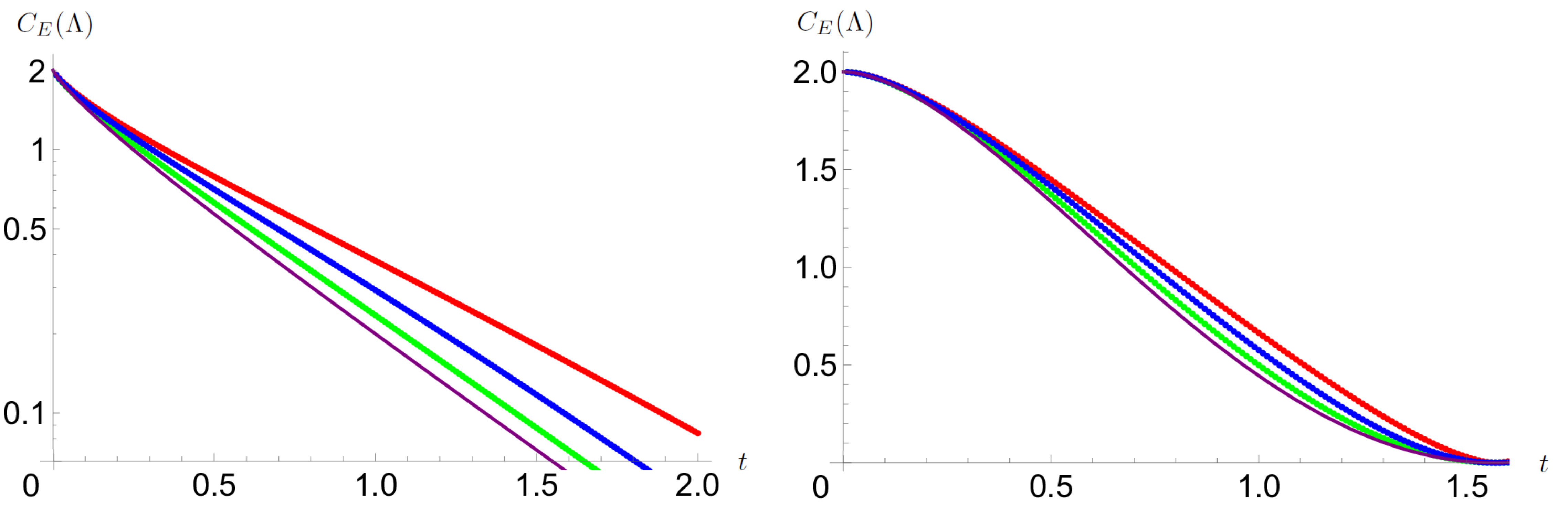}
\caption{Plots representing time-evolution of the entanglement-assisted classical capacity for exponentially decaying (left; logarithmic scale) and oscillating (right) function $\lambda(t)$. The color curves correspond to the parameter $p=0$ (purple), $p=2/3$ (green), $p=0.9$ (blue), and $p=1$ (red).}
\label{EA}
\end{figure}

\FloatBarrier

\subsection{Classical capacity}

The classical capacity measures the maximal rate of classical information reliably transmitted under infinitely many uses of a quantum channel with separable inputs only. It is related to the Holevo capacity via the asymptotic expression \cite{Holevo_CC,sw,Holevo}
\begin{equation}
C(\Lambda)=\lim_{n\to\infty}\frac 1n \chi(\Lambda^{\otimes n}).
\end{equation}
For the Pauli channels, the Holevo capacity is weakly additive, so that $\chi(\Lambda_U\otimes\Lambda_U)=2\chi(\Lambda_U)$. Therefore, unital phase-covariant channels satisfy \cite{Holevo_capacity}
\begin{equation}\label{ineq1}
C(\Lambda_{\rm U})=\chi(\Lambda_{\rm U}).
\end{equation}
In general, however, $\chi(\Lambda\otimes\Lambda)\geq 2\chi(\Lambda)$, so it is usually very hard to find analytical formulas for the classical capacity. In such cases, one can resort to calculating its lower and upper bounds. Note that the following inequalities always hold,
\begin{equation}\label{ineq2}
\chi(\Lambda)\leq C(\Lambda)\leq C_E(\Lambda).
\end{equation}
In the previous subsections, we have already seen that the capacities for mixtures $\Lambda(t)$ of $\Lambda_{\rm U}(t)$ and $\Lambda_{\rm NU}(t)$ satisfy
$\chi[\Lambda_{\rm U}(t)]\leq\chi[\Lambda_{\rm NU}(t)]$ as well as $C_E[\Lambda_{\rm U}(t)]\leq C_E[\Lambda_{\rm NU}(t)]$. This, together with eqs. (\ref{ineq1}) and (\ref{ineq2}), introduces the following partial orders,
\begin{align}
&\chi[\Lambda_{\rm U}(t)]=C[\Lambda_{\rm U}(t)]\leq\chi[\Lambda(t)]
\leq C[\Lambda(t)]\leq C_E[\Lambda(t)],\\
&C_E[\Lambda_{\rm U}(t)]\leq C_E[\Lambda(t)],
\end{align}
at any given time. Therefore, it remains to check whether it is possible that $C[\Lambda(t)]\geq C_E[\Lambda_{\rm U}(t)]$.

In Fig.3, we compare the entanglement-assisted classical capacity for the unital dynamical map with the Holevo capacity $\chi[\Lambda(t)]$ for the non-unital maps with different degrees of non-unitality. Again, we choose exponentially decaying $\lambda(t)=e^{-t}$ (left) and oscillating $\lambda(t)=\cos t$ (right). 
Observe that there exist times $t$ at which $C_E[\Lambda_{\rm U}(t)]\leq \chi[\Lambda_{\rm NU}(t)]$. In other words, the lower bound for the classical capacity of a non-unital channel temporarily exceeds entanglement-assisted capacity for a unital channel. Hence, for classical information transition purposes, non-unitality turns out to be locally a better resource than entanglement.

\begin{figure}[htb!]
   \includegraphics[width=0.8\textwidth]{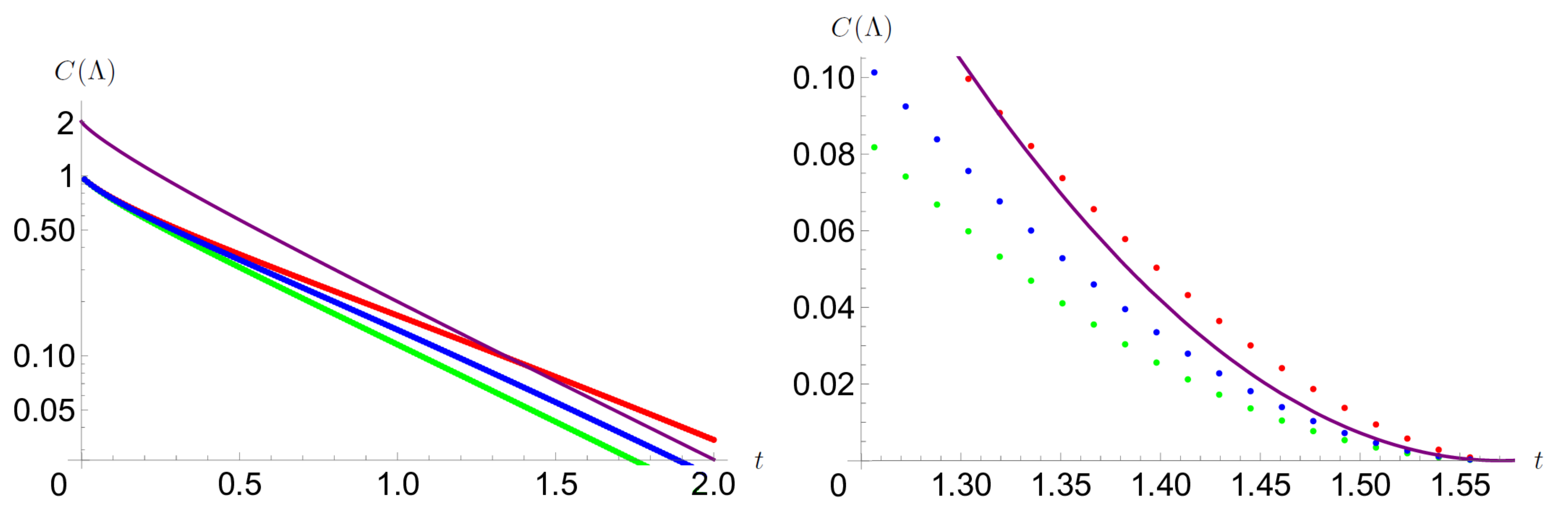}
\caption{Plots representing time-evolution of classical capacities for exponentially decaying (left) and oscillating (right) functions $\lambda(t)$. The color curves correspond to the entanglement-assisted capacity for $p=0$ (purple), as well as to the Holevo capacity for $p=2/3$ (green), $p=0.9$ (blue), and $p=1$ (red).}
\label{EA}
\end{figure}

\section{Conclusions}

We analyzed the evolution of phase-covariant dynamical maps under the master equations with time-local generators and non-local memory kernels, for which we provided methods of construction. As a case study, we considered classical mixtures of unital and maximally non-unital generalized amplitude damping channels. It turned out that such mixtures can also be obtained through convex combinations of the corresponding time-local generators and memory kernels. By changing the mixing parameter, it was possible to easily manipulate both the non-unitality degree and the stationary state of the dynamical map. We proved that the more non-unital the channel, the higher its Holevo and entanglement-assisted capacities. The increase in Holevo capacity was so significant that the classical capacity of highly non-unital channels temporary exceeded the entanglement-assisted classical capacity of unital channels. Therefore, non-unitality was more successful in enhancing the classical channel capacity than quantum entanglement.

Further analysis for the full family of phase-covariant dynamical maps is required to better understand this phenomenon. However, this first requires solving the open questions of finding analytical solutions for their capacities.

\section{Acknowledgements}

This research was funded in whole or in part by the National Science Centre, Poland, Grant number 2021/43/D/ST2/00102. For the purpose of Open Access, the author has applied a CC-BY public copyright licence to any Author Accepted Manuscript (AAM) version arising from this submission.

\bibliography{C:/Users/cyndaquilka/OneDrive/Fizyka/bibliography}
\bibliographystyle{C:/Users/cyndaquilka/OneDrive/Fizyka/beztytulow2}

\end{document}